\newcommand{\eps}{{\varepsilon}}
\newcommand{\RR}{{\mathbf R}}
\newcommand{\NN}{{\mathbf N}}
\newcommand{\ZZ}{{\mathbf Z}}
\newcommand{\QQ}{{\mathbf Q}}
\newcommand{\CC}{{\mathbf C}}
\newcommand{\cO}{{\cal O}}
\newcommand{\cp}{{\cal P}}
\newcommand{\ca}{{\cal A}}
\newcommand{\ZXn}{{\ZZ[X_1,\ldots,X_n]}}
\newcommand{\algos}[6]{{
\begin{alg}\label{#2}\end{alg}
\noindent\textbf{Name of the algorithm:} #1\\
\textbf{Context:} #3\\
\textbf{Input:} #4\\
\textbf{Description of the algebraic computation tree:} #5\\
\textbf{Complexity analysis:} #6$\hfill \qed$}}
\def\squareforqed{\hbox{\rlap{$\sqcap$}$\sqcup$}}
\def\qed{\ifmmode\squareforqed\else{\unskip\nobreak\hfil
\penalty50\hskip1em\null\nobreak\hfil\squareforqed
\parfillskip=0pt\finalhyphendemerits=0\endgraf}\fi}
\newcommand{\AJ}{{\frak a}}
\renewcommand{\epsilon}{\varepsilon}
\newtheorem{thm}{Theorem}[section]
\newtheorem{cor}[thm]{Corollary}
\newtheorem{lem}[thm]{Lemma}
\newtheorem{prop}[thm]{Proposition}
\newtheorem{deft}[thm]{Definition}
\newtheorem{alg}[thm]{Algorithm}
\author{R.~Grimson\footnote{Departamento de Matem\'atica, Universidad de Buenos Aires and CONICET, Ciudad Universitaria, Pab. I, 1428 Buenos Aires, Argentina. email: rgrimson@dm.uba.ar}, 
J.~Heintz\footnote{Corresponding Author. Departamento de Computaci\'on, Universidad de Buenos Aires and CONICET,
Ciudad Universitaria, Pab. I, 1428 Buenos Aires, Argentina, and
Departamento de Matem\'aticas, Estad\'istica y
Computaci\'on, Facultad de Ciencias, Universidad de Cantabria, Avda. de
los Castros s/n, 39005 Santander, Spain. email: joos@dc.uba.ar $\&$
joos.heintz@unican.es}, B.~Kuijpers\footnote{Database and Theoretical Computer Science Research Group, Hasselt University, Agoralaan, Gebouw D, 3590 Diepenbeek, Belgium.
email: bart.kuijpers@uhasselt.be}}
\title{Evaluating geometric queries using few arithmetic operations}
\begin{document}
\maketitle

\begin{abstract}
Let $\cp:=(P_1,...,P_s)$ be a given family of $n$-variate polynomials with integer coefficients and suppose that the degrees and logarithmic heights of these polynomials are bounded by $d$ and $h$, respectively. Suppose furthermore that for each $1\leq i\leq s$ the polynomial $P_i$ can be evaluated using $L$ arithmetic operations (additions, subtractions, multiplications and the constants 0 and 1). Assume that the family $\cp$ is in a suitable sense \emph{generic}. We construct a database $\cal D$, supported by an algebraic computation tree, such that for each $x\in [0,1]^n$ the query for the signs of $P_1(x),...,P_s(x)$ can be answered using $h d^{\cO(n^2)}$ comparisons and $nL$ arithmetic operations between real numbers. The arithmetic-geometric tools developed for the construction of $\cal D$ are then employed to exhibit example classes of systems of $n$ polynomial equations in $n$ unknowns whose consistency may be checked using only few arithmetic operations, admitting however an exponential number of comparisons.
\end{abstract}

\paragraph{Keywords:} Constraint database, query evaluation, computational complexity, consistency of polynomial equation systems.

\paragraph{MSC:} 68P15, 68Q25, 14P99, 14Q99.

\section{Introduction}\label{SecArithIntro}

By $\NN, \ZZ, \QQ$ and $\RR$ we denote the natural, the integer, the rational and the real numbers. Let $X_1,...,X_n$ be indeterminates over $\RR$ and let $X:=(X_1,...,X_n)$. Furthermore let $\cp:=(P_1,...,P_s)$ be a family of polynomials of $\ZZ[X]$. A sign condition $\sigma\in\{-1,0,1\}^s$, with $\sigma=(\sigma_1,...,\sigma_s)$, determines a polynomial
inequality system of the form
\begin{equation}\label{Eq1} \bigwedge_{1\leq i\leq s}
sign(P_i)=\sigma_i.\end{equation} We call $\sigma$ \emph{consistent}
if there exists a point $x\in\RR^n$ satisfying Condition~\ref{Eq1}.

The consistent sign conditions on the family $\cp=(P_1,\ldots,P_s)$
define a semi-algebraic partition $\ca:=\ca(\cp)$ of $\RR^n$, called an \emph{arrangement}. This arrangement induces a query $Q_{\ca}$ which determines, for each $x\in\RR^n$ the
(unique) element $A\in\ca$ with $x\in A$. This query is called the \emph{sign
condition query} for $P_1,\ldots,P_s$.

In 1984, Meyer auf der Heide~\cite{Mayer84Knapsack} presented a fast solution 
to the sign condition query for a family of linear polynomials with integer
coefficients (see also Theorem 3.27 in \cite{BurCS}). His strategy was to find, given a family of affine
linear polynomials in $\ZXn$ of logarithmic height bounded by
$h\in\NN$, a real number $\delta>0$ depending only on $n$ and $h$, called
the \emph{coarseness} of the given family, such that any hypercube
in $\RR^n$ of side-length $\delta$ has the following property: all the
affine hyperplanes defined by the given family that cut the
hypercube have a common intersection point. This fact allowed him to
design a non-uniform polynomial-time algorithm which solves the
Knapsack Problem. 

If the given family of linear polynomials is \emph{generic} in the sense defined below, then no $n+1$ polynomials have a common zero. Thus, any hypercube in $\RR^n$ of side-length $\delta$ satisfies a stronger condition: it is cut at most $n$ hyperplanes defined by the given family. In view of this observation we shall extend Meyer auf der Heide's argumentation to semi-algebraic subsets of $\RR^n$ defined by certain generic families of polynomials of arbitrary degree.

\paragraph{Notations and assumptions.}
Let $X:=(X_1,...,X_n)$ and let $P\in\ZZ[X]$ be a
polynomial; we denote by $\deg(P)$ its total degree and by $h(P)$ its logarithmic height defined as the maximum of the logarithmic heights of its coefficients. Given a semi-algebraic set $S$ in $\RR^n$, if the sign of $P$ is not constant on $S$, we say that $P$ \emph{cuts} $S$.

Let $\cp:=(P_1,...,P_s)$ be a family of distinct polynomials in
$\ZZ[X]$ of total degree and logarithmic height
bounded by $d$ and $h$ respectively. We remark that this
implies a bound on the cardinality $s$ of $\cp$: since each
polynomial is determined by its $\binom{n+d}{d}$ coefficients and
each coefficient by its $h$ bits, we obtain $s\leq
2^{h\binom{n+d}{d}}\leq 2^{h(d+1)^n}$.

We further remark that this bound on the cardinality of $\cp$ makes
the parameter $s$ disappear from the complexity bounds we are going to exhibit in this paper. Since $s$ is bounded by $2^{h(d+1)^n}$, a bound of the kind $(hd^n\log(s))^{\cO(1)}$ equals $(hd^n)^{\cO(1)}$, where $s$ does not intervene.

Given a first-order formula in the language of real closed fields $\varphi(X_1,...,\allowbreak X_n)$ with $n$ free-variables, we denote by ${\cal R}(\varphi)$ the \emph{realization} of $\varphi$ in $\RR^n$, namely ${\cal R}(\varphi):=\{x\in\RR^n\;|\;\RR\models\varphi(x)\}$. Analogously, for a given polynomial $P\in\ZZ[X]$, we denote by  ${\cal R}(P)$ the set 
${\cal R}(P):=\{x\in\RR^n\;|\;P(x)=0\}$.


We shall say that the family $\cp=(P_1,...,P_s)$ is
\emph{generic} if for any $1\leq r\leq n$ and any
$1\leq i_1<...<i_r\leq s$, the polynomials $P_{i_1},...,P_{i_r}$
form a regular sequence in $\QQ[X]$ or generate the
trivial ideal.
Since this notion of genericity is too restrictive for the results of the second part of this paper, we define a weaker notion. 

Let $\Delta\in\ZZ[X]$ be a polynomial of total degree and logarithmic height
bounded by $d$ and $h$ respectively. Given any semi-algebraic set $S\subset \RR^n$, we denote by $S_{\Delta}$ the set $$S_{\Delta}:=\{x\in S\;|\;\Delta(x)\geq 1\}.$$
We remark that for any $\Delta$ and any closed $S$ the set $S_{\Delta}$ is closed. 

\begin{deft}\label{DefGenAss} We say that the family $(P_1,...,P_s)$ is
\emph{generic outside ${\cal R}(\Delta)$} if for any $1\leq r\leq n$ and any
$1\leq i_1<...<i_r\leq s$, the polynomials $P_{i_1},...,P_{i_r}$
form a regular sequence in $\QQ[X]_{\Delta}$ or generate the
trivial ideal.
\end{deft}

We observe that both notions of genericity agree for $\Delta=1$.

\paragraph{Model of Computation.} 
In this paper we analyse the complexity of the sign condition problem from a constraint database point of view. We use algebraic computation trees to represent
constraint databases (see \cite{CDBs,Strassen90,BurCS} for the
background on constraint databases and algebraic complexity theory). We measure the number of arithmetic operations and comparisons that our algorithms perform on a given input. While divisions are not allowed in our model, the arithmetic operations of addition, subtraction and multiplication can be performed among any two real numbers. In particular all straight-line programs we are going to consider in this paper are supposed to be division-free. We remark that if we restrict our attention to rational inputs, the bit complexity of any algorithm in this paper can be bounded by a polynomial depending on its algebraic complexity and the logarithmic height of the input, \textit{i.e.}, it remains controlled.

\paragraph{Outline of the article.}
Let $\Delta,P_1,...,P_s$ polynomials in
$\ZZ[X]$ of total degree and logarithmic height bounded by
$d>1$ and $h$ respectively and assume that the family $\cp:=(P_1,...,P_s)$ is generic outside ${\cal R}(\Delta)$. We decompose the hypercube $[0,1]^n\subset\RR^n$ in small closed hypercubes with mutually disjoint interiors and side length of order $2^{-h d^{\cO(n^2)}}$.  We shall prove that if $R$ is one of these  hypercubes, the set $R_{\Delta}$  is cut by at most $n$ polynomials in $\cp$; the remaining polynomials in $\cp$ have constant, namely positive or negative, sign on it. The indices of the polynomials of $\cp$ that cut $R_{\Delta}$ for each  hypercube $R$ and the signs of the remaining polynomials on it are stored, at a preprocessing stage, in a constraint database.

In Section \ref{SecBit}, we present an algorithm that solves the sign condition
problem for this family in the set $[0,1]^n_{\Delta}=\{x\in [0,1]^n\;|\;\Delta(x)\geq 1\}$. It works as follows. Given a point $x\in [0,1]^n_{\Delta}$, the
pre-constructed database allows us to determine a small hypercube $R$ containing the point $x$, the polynomials in $\cp$ that cut the set $R_{\Delta}$ and the (constant) signs of the remaining polynomials in $\cp$ performing $h d^{\cO(n^2)}$ comparisons. Evaluating at $x$ the (at most $n$) polynomials of $\cp$ that cut the set $R_{\Delta}$, we solve the sign condition problem for the family $\cp$ performing only  $d^{\cO(n)}$ arithmetic operations. 

Finally, in Section \ref{SecApp}, we discuss applications of our method. Our results show that, admitting branchings in our complexity model, the number of algebraic operations necessary to answer the query for the consistency of two classes of simply structured equation systems of $n$ polynomials in $n$ unknowns may drop dramatically with respect to traditional methods based on the evaluation of elimination polynomials or their coefficient representation.

In \cite{GHK11} we have shown, without any genericity assumption, that the sign
condition query for $P_1,...,P_s$ can be evaluated using only
$\cO((L+n)^5\log(s))$ arithmetic operations and comparisons, where $L$ is the
number of essential multiplications (additions and scalar operations are
free) necessary to evaluate {\em all} polynomials $P_1,...,P_s$. However
this result cannot be applied in Section~\ref{SecApp}.



\section{The sign condition algorithm for polynomials with integer coefficients}\label{SecBit}

\subsection{Mathematical tools}
\paragraph{The Arithmetic Relation between Distance and Height.}
We extend the notion of height to rational numbers. If
$r=\frac{p}{q}\neq 0$ is a rational number with $p,q\in\ZZ$ coprime,
we define its \emph{height} as $H(r):=\max\{H(p),H(q)\}$ and its
\emph{logarithmic height} as $h(r):=\max\{h(p),h(q)\}$.

From Theorem 1.3.1 in \cite{BaPoRo94Paper} (see alternatively Theorem
14.21 in \cite{BaPoRo06}) we immediately obtain the following
result.

\begin{prop}\label{CorBPRJACM95}
Let $\varphi(Y_1,...,Y_l)=\exists X_1...\exists X_k
\psi(X_1,...,X_k,Y_1,...,Y_l)$ be an existential first-order formula with $l$ free variables, where $\psi$ is a boolean combination of polynomial equalities and inequalities involving polynomials with integer coefficients of degree bounded
by $d$ and logarithmic height bounded by $h$. Then, there exists
an equivalent quantifier-free formula involving polynomials with integer coefficients of
degree bounded by $d^{\cO(k)}$ and logarithmic height bounded by
${h d^{\cO(kl)}}$.\qed
\end{prop}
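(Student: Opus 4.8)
The plan is to obtain this statement as a direct specialization of the effective quantifier elimination theorem of Basu, Pollack and Roy cited above (Theorem~1.3.1 of \cite{BaPoRo94Paper}, equivalently Theorem~14.21 of \cite{BaPoRo06}), which handles a formula with an arbitrary alternation of quantifier blocks and records simultaneously the degree and the bit-size of the polynomials occurring in the output. Here we only need the single-block case, so I would instantiate that theorem with the number of blocks equal to $1$, with block length $k$, and with $l$ free variables $Y_1,\dots,Y_l$, the quantifier-free matrix being $\psi$ in the $k+l$ variables $X_1,\dots,X_k,Y_1,\dots,Y_l$.

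The derivation then amounts to three bookkeeping observations. First, $\psi$ involves only finitely many polynomials $P_1,\dots,P_s$ with integer coefficients of degree $\le d$ and logarithmic height $\le h$; although $s$ may be large, in the BPR bounds it only controls the \emph{number} of polynomials of the equivalent quantifier-free formula (through a factor of the shape $s^{(k+1)(l+1)}$), not their degrees or the sizes of their coefficients, so it does not enter the statement to be proved, which bounds only degree and height. Second, the BPR degree bound for one block of $k$ quantifiers produces polynomials of degree $d^{\cO(k)}$ in the equivalent formula, which is exactly the claimed degree bound. Third, the bit-complexity refinement of the theorem asserts that if the $P_i$ have integer coefficients of bit-size at most $h$, then the polynomials produced have integer coefficients of bit-size at most $h\,d^{\cO(kl)}$; reading ``bit-size'' as logarithmic height yields the claimed height bound.

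The only genuinely delicate point, and the one I would treat with care, is that the cited theorems are stated for a general alternation of blocks, with exponents indexed by all the block lengths $k_1,\dots,k_\omega$ and the number of free variables; one must substitute $\omega=1$, $k_1=k$ and verify that the composite exponents collapse precisely to $\cO(k)$ for the degree and $\cO(kl)$ for the height, rather than to the tower-type expressions that arise when $\omega\geq 2$. Beyond this substitution there is no obstacle: no passage to prenex form is needed since all quantifiers are existential, and the integrality of the coefficients is exactly the hypothesis under which the bit-complexity version of the theorem is formulated. Once the single-block specialization is quoted correctly, the conclusion is immediate.
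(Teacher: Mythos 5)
Your proposal is correct and takes essentially the same route as the paper: the paper offers no separate argument but obtains the proposition ``immediately'' from Theorem 1.3.1 of \cite{BaPoRo94Paper} (equivalently Theorem 14.21 of \cite{BaPoRo06}), which is exactly the single-existential-block specialization, with the same degree bound $d^{\cO(k)}$ and height bound $h\,d^{\cO(kl)}$, that you describe.
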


The following lemma is a form of Cauchy's bound on the roots of a univariate polynomial.

\begin{lem}\label{ALem1}
Let $\psi$ be a quantifier-free first-order formula over the reals
with only one free variable and involving polynomials with integer
coefficients of height bounded by $H$. Let $\mu\neq0$ belong to the border of the
realization of $\psi$, ${\cal{R}}(\psi)\subset\RR$. Then,
$|\mu|>\frac{1}{H+1}$.
\end{lem}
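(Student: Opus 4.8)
The plan is to reduce the statement to a one-variable problem where Cauchy's classical bound on polynomial roots applies directly, after first understanding the geometry of the set $\partial\,{\cal R}(\psi)$.

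\textbf{Step 1: Reduce to a zero of a single polynomial.} Since $\psi$ is quantifier-free, its realization ${\cal R}(\psi)\subset\RR$ is a finite Boolean combination of sets of the form $\{t\in\RR\;:\;Q(t)>0\}$, $\{t\in\RR\;:\;Q(t)=0\}$, $\{t\in\RR\;:\;Q(t)<0\}$, where the $Q$'s are the univariate polynomials occurring in $\psi$, each of height bounded by $H$. The border $\partial\,{\cal R}(\psi)$ is contained in the finite set of points where at least one of these polynomials changes sign or vanishes; hence every boundary point $\mu$ is a real root of some $Q$ among the polynomials appearing in $\psi$. (Here one must be slightly careful: if some $Q$ is identically zero the set it defines is $\RR$ or $\emptyset$ and contributes nothing to the border, so we may assume each relevant $Q$ is a nonzero polynomial.) Thus $\mu\neq 0$ is a root of a nonzero polynomial $Q\in\ZZ[T]$ with $h(Q)\le H$, i.e. all coefficients of $Q$ are integers of absolute value at most $2^H-1\le H$... more precisely bounded by the height $H$.

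\textbf{Step 2: Apply Cauchy's bound to $1/\mu$.} Write $Q(T)=\sum_{j=0}^m a_j T^j$ with $a_j\in\ZZ$, $|a_j|\le H$ (using $H$ for the ordinary, not logarithmic, height here as stated in the hypothesis), and $Q(\mu)=0$. Let $j_0$ be the smallest index with $a_{j_0}\ne 0$; then $\mu$ is a root of $\widetilde Q(T):=Q(T)/T^{j_0}=\sum_{j\ge j_0} a_j T^{j-j_0}$, which has nonzero constant term $a_{j_0}$, and $|a_{j_0}|\ge 1$. The number $\nu:=1/\mu$ is then a root of the reciprocal polynomial $T^{\deg\widetilde Q}\widetilde Q(1/T)$, whose leading coefficient is $a_{j_0}$ and whose remaining coefficients are among the $a_j$, all bounded by $H$ in absolute value. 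By Cauchy's classical bound, every root $\nu$ of a polynomial $b_k T^k+\cdots+b_0$ with $b_k\ne 0$ satisfies $|\nu|\le 1+\max_{0\le i<k}|b_i/b_k|\le 1+H/1=1+H$. Hence $|1/\mu|\le H+1$, which gives $|\mu|\ge \frac{1}{H+1}$, and the inequality is strict because $\nu$ being a root of that polynomial with all coefficients integers forces $|\nu|<1+H$ strictly when...

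\textbf{Step 3: Securing strictness.} The only delicate point is obtaining the strict inequality $|\mu|>\frac{1}{H+1}$ rather than $\ge$. One clean way is to note that if $|\nu|=1+\max_i|b_i/b_k|$ held with equality in Cauchy's estimate, then the triangle inequality $|b_k\nu^k|=|b_{k-1}\nu^{k-1}+\cdots+b_0|\le \max_i|b_i|(|\nu|^{k-1}+\cdots+1)=\max_i|b_i|\frac{|\nu|^k-1}{|\nu|-1}$ would have to be an equality, which combined with $|\nu|-1=\max_i|b_i|/|b_k|$ forces $|b_k\nu^k|=|b_k|\frac{|\nu|^k-1}{1}$, i.e. $|\nu|^k=|\nu|^k-1$, a contradiction; so the bound is always strict. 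Alternatively, and more simply, one replaces $H$ by $H+1$ in the crude estimate: since $|a_j|\le H$ and $|a_{j_0}|\ge 1$ we have $\max_i|b_i/b_k|\le H$, so $|\nu|< 1+(H+1)$ is certainly too weak — instead observe $|\nu|\le 1+H$ and then use that $\nu$ cannot satisfy $|\nu|=1+H$ with integer coefficients unless $Q$ is degenerate, which we excluded. I expect \emph{this strictness bookkeeping} to be the only real obstacle; the rest is the standard Cauchy argument and a routine semialgebraic description of the border in one variable. Note also that the hypothesis conflates $h$ and $H$ in the earlier parts of the paper, so some care with "height" versus "logarithmic height" is needed: the statement as written uses $H$ for the ordinary height, so $|a_j|\le H$ is the correct reading, and the constant $H+1$ comes out exactly from Cauchy's $1+\max|b_i/b_k|$.
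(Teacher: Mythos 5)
Your proposal is correct and takes essentially the same route as the paper: it identifies the boundary point $\mu\neq 0$ as a root of one of the (nonzero) polynomials occurring in $\psi$, factors out the power of $T$ to get a nonzero constant term, and applies Cauchy's bound via the reciprocal polynomial to conclude $|\mu|>\frac{1}{H+1}$, exactly as the paper does by citing Cauchy's bound (with your reading of $H$ as the ordinary, non-logarithmic height being the intended one). Your Step 3 worry is harmless: the displayed estimate already gives a contradiction from $|\nu|\ge 1+\max_i|b_i|/|b_k|$ alone (since $|b_k||\nu|^k\le |b_k|(|\nu|^k-1)$ is impossible), so the strictness the paper asserts without comment does hold and no equality-case analysis is needed.
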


\begin{proof}
Since the formula $\psi$ is quantifier free, $\mu$ must be a root of
some of the polynomials involved in $\psi$, say
$P=\sum_{i=0}^da_iT^i\in\ZZ[T]$. The height of $P$ is bounded by
$H$. We assume, without loss of generality, $a_0\neq 0$.

Thus, by Cauchy's bound for the zeroes of a polynomial \cite{mignotte92}, the absolute
value of any root of $P$ is at least $\frac{1}{H+1}$. In particular,
$|\mu|>\frac{1}{H+1}$.
\end{proof}

\begin{deft}
For $\alpha_1,\alpha_2,...,\alpha_n,\delta\in\QQ$, $\delta>0$,
the set
$$[\alpha_1,\alpha_1+\delta]\times...\times[\alpha_n,\alpha_n+\delta]\subset
\RR^n$$ is called a \emph{rational hypercube} with \emph{coordinates}
$\alpha_1,...,\alpha_n,\alpha_1+\delta,...,\alpha_n+\delta$. The \emph{logarithmic
height} of the hypercube is defined as the maximal logarithmic height of its
coordinates. Its \emph{side length} is $\delta$.
\end{deft}

\begin{prop}\label{ArithProp}
There exists a universal real constant $c>0$ that satisfies, for
any three positive integers $h, n$ and $d>1$ the following
condition.

Let $R\subset\RR^n$ be a rational hypercube of logarithmic
height $h$ and let $P_1,...,P_r,Q,\Delta\in \ZZ[X]$ be
polynomials of degree at most $d$ and of logarithmic height bounded
by $h$. Consider the subsets of $\RR^n$
$$S:=\{P_1=0,...,P_r=0\}\cap R_{\Delta}\;\;\text{and}\;\;
T:=\{Q=0\}\cap R_{\Delta}.$$ If $S$ and $T$ are
disjoint then their Euclidean distance is strictly greater than
$2^{-h d^{cn}}$.
\end{prop}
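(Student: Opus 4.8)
The plan is to reduce the multivariate distance bound to a univariate root-separation statement to which Lemma~\ref{ALem1} applies, by encoding the distance between $S$ and $T$ as a quantity definable by a first-order formula of controlled size. Concretely, I would introduce the function $\delta(t)$ expressing ``$t$ is the square of the Euclidean distance between $S$ and $T$'' (or, to stay within the integer-coefficient world, work with $t = \mathrm{dist}(S,T)^2$). The point $\mu := \mathrm{dist}(S,T)^2$ lies on the border of the realization of the one-free-variable formula
$$
\psi(t) := \exists x \exists y\; \Bigl(\textstyle\bigwedge_i P_i(x)=0\Bigr) \wedge \Delta(x)\geq 1 \wedge x\in R \wedge Q(y)=0 \wedge \Delta(y)\geq 1 \wedge y\in R \wedge \sum_{j}(x_j-y_j)^2 \leq t,
$$
since $\psi$ describes exactly the half-line $[\mu,\infty)$ when $S$ and $T$ are nonempty and disjoint (so $\mu>0$), and is empty or all of $\RR$ in degenerate cases, which are either excluded by the disjointness hypothesis or handled trivially. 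The membership conditions $x\in R$ and $y\in R$ are conjunctions of $2n$ linear inequalities with coefficients of logarithmic height $h$ (the hypercube coordinates), and all polynomials appearing have degree $\leq \max\{d,2\}=d$ (using $d>1$) and logarithmic height $\cO(h)$.

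Next I would apply Proposition~\ref{CorBPRJACM95} to eliminate the $2n$ existential quantifiers $x_1,\dots,x_n,y_1,\dots,y_n$ from $\psi(t)$, obtaining an equivalent quantifier-free formula $\psi'(t)$ in the single variable $t$ whose polynomials have integer coefficients of degree bounded by $d^{\cO(n)}$ and logarithmic height bounded by $h\, d^{\cO(n)}$ (here $k=2n$ and $l=1$, so the height bound $h\,d^{\cO(kl)}$ becomes $h\,d^{\cO(n)}$). Thus the height $H$ of the polynomials occurring in $\psi'$ satisfies $\log H \leq h\, d^{\cO(n)}$, i.e. $H \leq 2^{h\, d^{\cO(n)}}$. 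Since $\mu$ is a nonzero border point of ${\cal R}(\psi')={\cal R}(\psi)\subset\RR$, Lemma~\ref{ALem1} yields $\mu > \tfrac{1}{H+1} > 2^{-h\, d^{cn}}$ for a suitable universal constant $c$, after absorbing the $+1$ and the implicit constants into the exponent. Finally, since $\mathrm{dist}(S,T)^2 = \mu$ and $\mathrm{dist}(S,T)<1$ (both sets lie in the unit-height hypercube $R$, which we may assume has side length $<1$; if not the bound is trivial), we get $\mathrm{dist}(S,T) \geq \mathrm{dist}(S,T)^2 = \mu > 2^{-h\,d^{cn}}$, possibly after one more adjustment of $c$ to pass from the square to the distance itself.

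The main obstacle I expect is bookkeeping rather than conceptual: one must verify that the constant $c$ can be chosen \emph{uniformly} in $h$, $n$, $d$ — that is, that no hidden dependence on these parameters sneaks in through the $\cO(\cdot)$ in Proposition~\ref{CorBPRJACM95}, the passage from $\log H$ to the final exponent, or the step from $\mu$ to $\sqrt\mu$. This requires being careful that all the implicit constants are truly absolute and that the degree bound $d^{\cO(n)}$ together with the height bound $h\,d^{\cO(n)}$ combine into a single clean exponent $cn$; since $d>1$, factors of the form $d^{\cO(1)}$ and additive terms like $\log(H+1)$ versus $\log H$ are harmless and can be swallowed by enlarging $c$. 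A secondary point to dispatch carefully is the degenerate-case analysis: if $S=\emptyset$ or $T=\emptyset$ the statement is vacuous (distance to the empty set is $+\infty$), and if $\mu=0$ then $S$ and $T$ are not disjoint (both are closed and bounded, hence compact, so their distance is attained), contradicting the hypothesis; thus $\mu>0$ and Lemma~\ref{ALem1} genuinely applies.
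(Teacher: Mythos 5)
Your proposal is correct and follows essentially the same route as the paper: encode the (squared) distance between $S$ and $T$ by a one-free-variable existential formula, eliminate the $2n$ quantifiers via Proposition~\ref{CorBPRJACM95}, and apply Lemma~\ref{ALem1} to the resulting border point. The only cosmetic difference is that the paper writes the constraint as $|X-Y|^2\le\varepsilon^2$ so that the distance itself is the border point, sparing your final passage from $\mu=\mathrm{dist}(S,T)^2$ to the distance (which, in your write-up, should be justified by splitting on whether $\mathrm{dist}(S,T)<1$ rather than on the side length of $R$).
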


\begin{proof}
Let us assume that $S$ and $T$ are disjoint and
non-empty. Since they are closed and bounded, the distance between
them is positive, say $\mu\in\RR_{>0}$. Denoting by $\chi_{R_{\Delta}}$ the canonical
first-order formula that expresses the set $R_{\Delta}$, we consider the
following first-order formula with $\eps$ as its only free-variable:
$$\displaylines{\quad \varphi(\varepsilon):=\exists
X_1\;...\;\exists X_n \exists Y_1\;...\; \exists
Y_n\;\;\;\chi_{R_{\Delta}}(X)\;\land\;\chi_{R_{\Delta}}(Y)\;\land
\hfill{}\cr\hfill{}P_1(X)=0\;\land\;\;...\;\;\land\; P_r(X)=0\;\land
\hfill{}\cr\hfill{} Q(Y)=0\;\land\;|X-Y|^2\leq
\varepsilon^2.\quad\quad\quad\quad\quad\quad\quad\;\;\;}$$

Clearly, a positive real number $\varepsilon$ satisfies this formula
if and only if $\varepsilon\geq \mu$. The formula $\varphi$ contains
$\eps$ as the only free variable, $X_1,...,X_n,Y_1,...,Y_n$ as
bounded variables and involves polynomials of degree at most $d$ and
logarithmic height at most $h$. Hence, by Proposition
\ref{CorBPRJACM95}, there exists an equivalent quantifier-free
first-order formula $\psi$ involving polynomials of degree bounded
by $d^{\cO(n)}$ and height bounded by $2^{h d^{\cO(n)}}$. Thus,
by Lemma \ref{ALem1}, $\mu$ is at least $\frac{1}{1+2^{h
d^{\cO(n)}}}=2^{-h d^{\cO(n)}}$.
\end{proof}

\paragraph{Semi-algebraic $k$-determined sets.}
Given a semi-algebraic set $S\subset
\RR^n$ and a natural number $1\leq k\leq n$, we say that the family $\cp$ is \emph{$k$-determined over $S$} if at most $k$ polynomials of the family $\cp$ cut $S$ (in other words, the signs of all the polynomials of $\cp$ are determined over $S$, except for at most $k$ of them).

The next statement shows that under our genericity assumption (see Definition 
\ref{DefGenAss}) the family $\cp$ is $n$-determined over $R_{\Delta}$ for any hypercube $R\subset[0,1]^n$ of side length $2^{-h d^{\cO(n^2)}}$.

\begin{prop}\label{PropDelta}
There exists a universal real constant $c'>0$ that satisfies, for
any three positive integers $h, n$ and $d>1$ the following
condition.

Let $\Delta,P_1,...,P_s$ be polynomials in
$\ZZ[X]$ of total degree and logarithmic height bounded by
$d>1$ and $h$ respectively. Assume that the family $\cp:=(P_1,...,P_s)$ is generic outside ${\cal R}(\Delta)$. Then, given any hypercube $R\subset [0,1]^n$ of side length $\delta:=2^{-h d^{c' n^2}+1}$, the family $\cp$ is $n$-determined over the set $R_{\Delta}$.
\end{prop}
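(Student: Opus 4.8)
The plan is to argue by contradiction. Suppose that $n+1$ of the polynomials of $\cp$, say $P_{i_1},\dots,P_{i_{n+1}}$, all cut $R_{\Delta}$; the universal constant $c'$ will be fixed in the course of the argument and the hypothesis will be contradicted by the genericity of $\cp$ outside ${\cal R}(\Delta)$. The heart of the proof is a single quantifier‑elimination estimate, modelled on the proof of Proposition \ref{ArithProp}, which turns the hypothesis ``each $P_{i_j}$ has non‑constant sign on the minuscule set $R_{\Delta}$'' into the much stronger conclusion ``$P_{i_1},\dots,P_{i_{n+1}}$ have a common zero in $[0,1]^n$ at which $\Delta\ge 1$''. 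Once such a common zero is produced, genericity applied to $P_{i_1},\dots,P_{i_n}$ has to forbid it.

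For the construction, note that $R_{\Delta}\ne\emptyset$ (otherwise the claim is trivial), so fix $x_0\in R_{\Delta}$; and since each $P_{i_j}$ has non‑constant sign on $R_{\Delta}\subseteq R$ while $R$ is convex, $P_{i_j}$ has a zero $c_j\in R$. As $R$ has side length $\delta$, the point $x_0$ and all the $c_j$ lie within Euclidean distance $\sqrt{n}\,\delta$ of one another. Consider the formula with the single free variable $\eps$
\[
 \varphi(\eps)\ :=\ \exists X\,\exists Y_1\cdots\exists Y_{n+1}\ \Big[\,\chi_{[0,1]^n_{\Delta}}(X)\ \wedge\ \bigwedge_{j=1}^{n+1}\big(\chi_{[0,1]^n}(Y_j)\wedge P_{i_j}(Y_j)=0\wedge |X-Y_j|^2\le\eps^2\big)\,\Big],
\]
where $\chi_{[0,1]^n}$ and $\chi_{[0,1]^n_{\Delta}}$ are the obvious quantifier‑free descriptions of $[0,1]^n$ and $[0,1]^n_{\Delta}$, involving only polynomials of degree $\le d$ and logarithmic height $\le h+1$. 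By the preceding observation $\varphi(\sqrt{n}\,\delta)$ holds (take $X:=x_0$, $Y_j:=c_j$), and since the relevant cubes are compact the set of non‑negative reals satisfying $\varphi$ is a half‑line $[\eps_0,\infty)$ with $\eps_0\in[0,\sqrt{n}\,\delta]$ a border point of ${\cal R}(\varphi)$. The formula has $n^2+2n$ bound variables, one free variable, and polynomials of degree $\le d$ and logarithmic height $\le h+1$, so Proposition \ref{CorBPRJACM95} yields an equivalent quantifier‑free formula with polynomials of degree $d^{\cO(n^2)}$ and logarithmic height $hd^{\cO(n^2)}$, whence by Lemma \ref{ALem1} either $\eps_0=0$ or $\eps_0>2^{-hd^{\cO(n^2)}}$. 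Choosing $c'$ larger than the constant hidden in this $\cO(n^2)$ makes $\sqrt{n}\,\delta=\sqrt{n}\cdot 2^{-hd^{c'n^2}+1}<2^{-hd^{\cO(n^2)}}$, forcing $\eps_0=0$. Then $\varphi(0)$ holds, and specializing $X=Y_1=\cdots=Y_{n+1}=:z$ this says exactly that there is a point $z\in[0,1]^n$ with $\Delta(z)\ge 1$ and $P_{i_1}(z)=\cdots=P_{i_{n+1}}(z)=0$.

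Now invoke genericity outside ${\cal R}(\Delta)$. By Definition \ref{DefGenAss} the polynomials $P_{i_1},\dots,P_{i_n}$ either generate the trivial ideal in $\QQ[X]_{\Delta}=\QQ[X][\Delta^{-1}]$ or form a regular sequence there. If they generate the trivial ideal, then $\Delta^{N}\in(P_{i_1},\dots,P_{i_n})\QQ[X]$ for some $N\ge 1$; evaluating at $z$ gives $\Delta(z)^{N}=0$, contradicting $\Delta(z)\ge 1$, and we are done in this case. If instead they form a regular sequence, then $\{P_{i_1}=\cdots=P_{i_n}=0\}\cap\{\Delta\ne 0\}$ is zero‑dimensional, so $z$ is one of its finitely many isolated points, while $P_{i_{n+1}}$ vanishes at $z$ as well and still cuts $R_{\Delta}$.

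I expect this regular‑sequence case to be the main obstacle, since the dimension count by itself is compatible with the configuration just described; closing it will require more than the $0$‑dimensionality of one $n$‑fold intersection. The natural continuation is to exploit the genericity of the other $n$‑element subsets of $\{P_{i_1},\dots,P_{i_{n+1}}\}$: the set $C:=\{P_{i_1}=\cdots=P_{i_{n-1}}=0\}\cap\{\Delta\ne 0\}$ is one‑dimensional and passes through $z$, and by genericity of $P_{i_1},\dots,P_{i_{n-1}},P_{i_n}$ and of $P_{i_1},\dots,P_{i_{n-1}},P_{i_{n+1}}$ each of $P_{i_n}$ and $P_{i_{n+1}}$ vanishes at $z$ but only at finitely many points of $C$; one then has to reconcile the local sign behaviour of $P_{i_n}$ and $P_{i_{n+1}}$ along a branch of $C$ at $z$ with the fact that each of them changes sign inside $R_{\Delta}$ — which, by repeating the estimate of the second paragraph with extra witnesses forced to accumulate at $z$, one may arrange to happen in an arbitrarily small neighbourhood of $z$ — so as to obtain an impossible coincidence. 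Once this is carried out, the proposition follows with the universal constant $c'$ fixed in the second paragraph.
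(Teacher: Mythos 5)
Your first two paragraphs are correct, and they take a genuinely more direct route to the key intermediate fact than the paper does. With a single application of Proposition~\ref{CorBPRJACM95} to one formula with $n^2+2n$ bound variables (plus the compactness argument showing $\eps_0$ is attained, and the dichotomy via Lemma~\ref{ALem1}), you produce a common zero $z\in[0,1]^n$ of the $n+1$ cutting polynomials with $\Delta(z)\geq 1$, after fixing $c'$ just above the constant coming from the quantifier elimination. The paper reaches exactly the same intermediate conclusion by an $n$-step induction that enlarges the hypercube at each stage and applies Proposition~\ref{ArithProp} once per stage to the sets $\{P_1=\cdots=P_i=0\}\cap R^{i+1}_{\Delta}$ and $\{P_{i+1}=0\}\cap R^{i+1}_{\Delta}$; both routes end with the exponent $h d^{\cO(n^2)}$, so your one-shot shortcut loses nothing.

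The genuine gap is exactly where you stopped: the derivation of a contradiction from genericity, and your suspicion about the regular-sequence case is well founded. If Definition~\ref{DefGenAss} is read literally (subfamilies of size $r\leq n$ only), a common zero of $n+1$ of the $P_i$ in $\{\Delta\geq 1\}$ is \emph{not} contradictory, and in fact the proposition itself fails under that reading: take $n=2$, $\Delta=1$, $P_1=X_1$, $P_2=X_2$, $P_3=X_1-X_2$; every pair is a regular sequence in $\QQ[X_1,X_2]$, yet all three polynomials cut the corner cube $[0,\delta]^2\subset[0,1]^2$. So the continuation you sketch (sign analysis along the curve $C$) cannot be carried through. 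What the paper's proof implicitly uses, and what you must import to finish, is that the genericity condition is meant to apply to $(n+1)$-element subfamilies as well: since $\QQ[X]_{\Delta}$ has Krull dimension $n$, no $n+1$ polynomials can form a regular sequence there, so for $r=n+1$ the condition forces $P_{i_1},\dots,P_{i_{n+1}}$ to generate the trivial ideal in $\QQ[X]_{\Delta}$, hence to have no common zero at which $\Delta\neq 0$; this contradicts your point $z$ immediately and closes the proof with the $c'$ you fixed. With only the literal $r\leq n$ hypothesis, no argument can close the gap, because the statement is then false.
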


In the spirit of~\cite{Mayer84Knapsack}, we call the rational number $\delta$ a \emph{coarseness} for the family $\cp$.

\begin{proof}
Let $c$ be the universal constant of Proposition \ref{ArithProp} and let $c':=c+1$. Remark that the following condition is satisfied
\begin{equation}\label{eqeq}h\frac{d^{n^2(c'-c)}}{\ulcorner\log(n)\urcorner^n}\geq 1.\end{equation}

Let $R^0:=R\subset [0,1]^n$ be an hypercube of side length $\delta:=2^{-h d^{c' n^2}+1}$. We will show that the family $\cp$ is $n$-determined over the set $R_{\Delta}$. Assume to the contrary and with out loss of generality that $P_1,...,P_{n+1}$ are different polynomials of $\cp$ that cut the set $R_{\Delta}$.

For $1\leq i\leq n+1$, we define inductively a hypercube $R^i$ with the following properties:
\begin{itemize}
\item its logarithmic height is $h_i:=h\frac{d^{c'n^2-cn (i-1)}}{\ulcorner \log(n)\urcorner^{i-1}}$,
\item its side length is $2^{-h_i}<1$,
\item $R^{i-1}\subset R^{i}\subset [0,1]^n$,
\item the polynomials $P_{1},...,P_{n+1}$ cut the set $R^i_{\Delta}$ and
\item the polynomials $P_1,...,P_i$ have a common intersection point in $R^i_{\Delta}$.
\end{itemize}

Let $h_1:={h d^{c' n^2}}$ and let $R^1\subset [0,1]^n$ be an hypercube of side length $\delta_1:=2^{-h_1}=2\delta$ and logarithmic height $h_1$ that contains $R^0$. Since  $P_1,...,P_{n+1}$ cut $R^0_{\Delta}$, the polynomial $P_1$ cuts also $R^1_{\Delta}$. 

For $1\leq i\leq n$, assume that a hypercube $R^i\supset R^{i-1}$ of side length $2^{-h_i}$ satisfying the conditions stated above has been defined. In particular, the polynomials $P_1,...,P_i$ have a common intersection point in $R^i_{\Delta}$ and $P_{i+1}$ cuts $R^i_{\Delta}$.

Let $R^{i+1}$ be a hypercube contained in $[0,1]^n$ of logarithmic height $h_{i+1}$ and side length $2^{-h_{i+1}}$ that contains the hypercube $R^i$. By construction, it satisfies the first four conditions stated above. We claim that the polynomials $P_1,...,P_{i+1}$ have a common intersection point in $R^{i+1}_{\Delta}$.

Since $R^{i+1}\supset R^{i}$, the induction hypothesis implies that $S:=\{P_1=0,...,P_i=0\}\cap R^{i+1}_{\Delta}$ and $T:=\{P_{i+1}=0\}\cap R^{i+1}_{\Delta}$ are not empty. 

By Proposition \ref{ArithProp}, if $S$ and $T$ are disjoint then their Euclidean distance is strictly greater than $\rho_i:=2^{-h_{i+1} d^{c\cdot n}}$. Since both sets have points inside $R^{i}_{\Delta}$ and the diagonal of the hypercube $R^{i}$ measures $\sqrt n 2^{-h_i}\leq 2^{-h_i+\log (n)} \leq 2^{-h_i/ \log(n)} \leq \rho_i$ we conclude that $S$ and $T$ are not disjoint. Thus the polynomials $P_1,...,P_{i+1}$ have a common intersection point in $R^{i+1}_{\Delta}$ what concludes the induction step.

After $n$ steps, we obtain a hypercube $R^{n+1}\subset[0,1]^n$ such that $P_1,...,P_{n+1}$ have a common intersection point in $R^{n+1}_{\Delta}$. This contradicts the genericity of the family $\cp$ outside ${\cal R}(\Delta)$. We conclude that  the family $\cp$ is $n$-determined over the set $R_{\Delta}$.
\end{proof}

We remark that in the case of affine hyperplanes ($d=1$, the affine
linear case) Proposition \ref{CorBPRJACM95}
and hence Proposition \ref{ArithProp} can be improved to obtain
logarithmic heights and distances which are simply exponential in the dimension $n$. However, the logarithmic height resulting from the inductive proof of Proposition \ref{PropDelta} is of order $2^{h
\cO(n)^n}$ and therefore doubly exponential in the dimension $n$. Thus, our method does not represent a proper generalization of Meyer auf der Heide's method for the linear case.

\subsection{The Sign Condition Algorithm} 
In this section we assume given $\Delta,P_1,...,P_s$ polynomials in
$\ZZ[X]$ of total degree and logarithmic height bounded by
$d>1$ and $h$ respectively. We further assume that the family $\cp:=(P_1,...,P_s)$ is generic outside ${\cal R}(\Delta)$.

Based on the results of the last
paragraph, we construct a big database $\cal D$ of size $2^{h
d^{\cO(n^2)}}$ that allows, for any $x\in[0,1]^n_{\Delta}$, a fast
determination of the sign condition satisfied by the polynomials of
$\cp$ at $x$. 


First we describe the following algorithm which is a form of
multidimensional binary search, with $m$ as a parameter.

\algos{${\cal D}_0^m$}{Query-Partition}{There are fixed positive integers $m$ and $n$.}{A point
$x=(x_1,...,x_n)\in[0,1]^n$.}{\\The algebraic computation tree performs, for
$j=1,...,n$, a unidimensional binary search to determine $i_j$, defined as the first $i$ such that
$x_j\in [\frac{i}{m}, \frac{i+1}{m}]$. \\ The computation finishes in a leaf labeled $(i_1,...,i_n)\in\NN^n$ such that
$x$ belongs to $\Pi_{j=1}^n[\frac{i_j}{m}, \frac{i_j+1}{m}]$.
} {Let $\mu=h(m)$ be the logarithmic height of $m$. Each unidimensional binary search requires $\cO(\mu)$ comparisons. 
Hence, the algorithm has algebraic complexity $\cO(\mu n)$.} 

\medskip

See for instance~\cite{Knuth3} for the definition and complexity analysis of the unidimensional binary search.
We remark that the algebraic computation tree  ${\cal D}_0^m$ uses only comparisons and no arithmetic operations. 

%


\medskip

A direct instantiation of the decision procedure for the existential theory of the reals given in \cite{BaPoRo06}, leads to the following result, which is useful for the cost analysis of the construction of the database $\cal D$ below.

\begin{prop}\label{PropCuts}
Let $R\subset\RR^n$ be a hypercube, let $P,\Delta\in\ZZ[X]$ be 
polynomials of degree at most $d>1$ and suppose that $h$ bounds the logarithmic heights of
$R$, $P$ and $\Delta$. Then, it is possible to decide whether $P$ cuts $R_{\Delta}$ using
$d^{\cO(n)}$ arithmetic operations between rational numbers of logarithmic
height bounded by $h{d^{\cO(n)}}$.
\end{prop}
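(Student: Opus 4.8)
The plan is to reduce the question ``does $P$ cut $R_\Delta$?'' to two instances of the decision problem for the existential theory of the reals and then invoke the quantitative complexity bounds for that decision procedure. First I would note that, by definition, $P$ cuts $R_\Delta$ if and only if $P$ does not have constant sign on $R_\Delta$, i.e.\ if and only if at least two of the three sets
\[
R_\Delta\cap\{P<0\},\qquad R_\Delta\cap\{P=0\},\qquad R_\Delta\cap\{P>0\}
\]
are non-empty. Equivalently, $P$ cuts $R_\Delta$ precisely when the conjunction of the two sentences
\[
\exists X\,\bigl(\chi_{R_\Delta}(X)\wedge P(X)\le 0\bigr)\qquad\text{and}\qquad\exists X\,\bigl(\chi_{R_\Delta}(X)\wedge P(X)\ge 0\bigr)
\]
holds but $P$ is not identically zero on $R_\Delta$; a slightly cleaner route is to observe that $P$ cuts $R_\Delta$ iff both $\exists X\,(\chi_{R_\Delta}(X)\wedge P(X)>0)$ and $\exists X\,(\chi_{R_\Delta}(X)\wedge P(X)<0)$ hold, or one of these holds together with $\exists X\,(\chi_{R_\Delta}(X)\wedge P(X)=0)$ while the strict one of opposite sign fails. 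In any case, deciding whether $P$ cuts $R_\Delta$ amounts to evaluating the truth of a bounded number (at most three) of existential sentences, each with $n$ quantified variables and built from the polynomials $P$, $\Delta$, and the $2n$ affine polynomials $X_i-\alpha_i$, $X_i-\alpha_i-\delta$ describing the hypercube $R$.

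Next I would record the relevant parameters: the canonical formula $\chi_{R_\Delta}$ is a conjunction of the $2n$ linear inequalities defining $R$ together with $\Delta(X)\ge 1$, so all polynomials appearing in each of the three sentences have degree at most $\max\{1,d,\deg\Delta\}\le d$ (recall $d>1$) and logarithmic height bounded by $h$ (the coordinates of $R$ have logarithmic height at most $h$ by hypothesis, and $h$ bounds $h(P)$ and $h(\Delta)$). Each sentence has $n$ existentially quantified variables and involves $s'\le 2n+2$ distinct polynomials. I would then apply the quantitative real-quantifier-elimination / decision bounds from \cite{BaPoRo06} (the complexity analysis accompanying Theorem~1.3.1 in \cite{BaPoRo94Paper}, or the corresponding algorithm in \cite{BaPoRo06}): deciding such a sentence takes $(s' d)^{\cO(n)}=d^{\cO(n)}$ arithmetic operations on rational numbers whose logarithmic height stays bounded by $h\cdot d^{\cO(n)}$, which is exactly the claimed bound. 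Since we run this decision procedure a constant number (at most three) of times and then combine the outcomes with $\cO(1)$ boolean operations, the overall cost is still $d^{\cO(n)}$ arithmetic operations on rationals of logarithmic height $h d^{\cO(n)}$.

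I do not expect a genuine mathematical obstacle here; the statement is essentially a bookkeeping instantiation of a known decision procedure. The one point that requires care — and is the only ``hard'' part — is making sure the height and operation-count bounds cited from \cite{BaPoRo06} are stated for the decision problem (existential sentences, no free variables) with the dependence on the number of polynomials $s'$, their degree $d$, and the number of variables $n$ in the precise form $d^{\cO(n)}$ operations and $h d^{\cO(n)}$ heights, rather than the more general parametric-elimination form appearing in Proposition~\ref{CorBPRJACM95}; since here $l=0$ there is no doubly-indexed exponent and the bounds collapse to the single-exponential shape claimed. A secondary, purely cosmetic point is to phrase the disjunction of cases correctly so that the finitely many oracle calls indeed determine whether $P$ is non-constant on $R_\Delta$; this is immediate from the trichotomy of signs.
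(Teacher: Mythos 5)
Your proposal is correct and follows essentially the same route as the paper: reduce the question to a bounded number of existential sentences built from $P$, $\Delta$ and the $2n$ linear constraints defining $R$, and decide them with the singly exponential procedure of \cite{BaPoRo06} (Theorems 13.11 and 13.13), yielding $d^{\cO(n)}$ operations on rationals of logarithmic height $h\,d^{\cO(n)}$. The only difference is that the paper tests the single sentence $\exists X\,(P(X)=0\wedge\chi_{R_{\Delta}}(X))$, whereas your sign-trichotomy formulation is in fact the more careful one, since $R_{\Delta}$ need not be connected (so $P$ may change sign on $R_{\Delta}$ without vanishing there) and $P$ could also vanish identically on $R_{\Delta}$.
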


\begin{proof}\label{PropCuts?}
Let $\chi_{R_{\Delta}}$ the canonical first-order formula that describes the set ${R_{\Delta}}$. Clearly, the formula $$\varphi:=\exists X\; (P(X)=0\land
\chi_{R_{\Delta}}(X))$$ is true if and only if $P$ cuts $R_{\Delta}$. By Theorems 13.11 and 13.13 
in \cite{BaPoRo06}, this sentence can be decided within the stated bounds.
\end{proof}


\begin{thm}\label{ThmSCArith}
Let $\Delta,P_1,...,P_s$ be polynomials in
$\ZZ[X]$ of total degree and logarithmic height bounded by
$d>1$ and $h$ respectively. Assume that the family $\cp:=(P_1,...,P_s)$ is generic outside ${\cal R}(\Delta)$ and that for any $1\leq k\leq s$ there is given a  straight-line program $\beta_k$ in $\ZZ[X]$ of length $L\leq d^{\cO(n)}$ which evaluates the polynomial $P_k\in\cp$.

Then, there exists a database $\cal D$, represented by an algebraic computation tree of size $2^{h d^{\cO(n^2)}}$,
that can be constructed in time $2^{h d^{\cO(n^2)}}$ and that
allows to determine, for any point $x\in[0,1]^n_{\Delta}$, 
the sign conditions satisfied by the polynomials of $\cp$ at $x$, performing 
$h d^{\cO(n^2)}$ comparisons and $nL=d^{\cO(n)}$ arithmetic operations in $\RR$.
\end{thm}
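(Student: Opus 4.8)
The plan is to build the database $\cal D$ in two layers. The outer layer is a multidimensional binary search that localizes the query point $x$ in a small hypercube, and the inner layer is a precomputed table that records, for each such hypercube, which polynomials of $\cp$ cut it and the constant signs of the remaining ones; the query is then answered by evaluating only the few polynomials that cut the relevant hypercube.

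First I would fix the coarseness. Let $c'$ be the constant from Proposition~\ref{PropDelta}, set $m:=2^{\lceil h d^{c'n^2}\rceil}$ (so that $\frac1m\le 2^{-hd^{c'n^2}+1}=\delta$), and consider the grid of $m^n$ closed hypercubes $R_{(i_1,\dots,i_n)}:=\prod_{j=1}^n[\frac{i_j}{m},\frac{i_j+1}{m}]$ with $0\le i_j<m$. Each such hypercube has side length at most $\delta$ and logarithmic height $\cO(hd^{c'n^2})=hd^{\cO(n^2)}$, so by Proposition~\ref{PropDelta} the family $\cp$ is $n$-determined over $R_{\Delta}$ for every grid hypercube $R$. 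Next I would describe the preprocessing: for each of the $m^n=2^{hd^{\cO(n^2)}}$ grid hypercubes $R$, and for each of the $s\le 2^{h(d+1)^n}$ polynomials $P_k$, use Proposition~\ref{PropCuts} to decide whether $P_k$ cuts $R_{\Delta}$; this costs $d^{\cO(n)}$ arithmetic operations on rationals of logarithmic height $hd^{\cO(n)}$ per test. By $n$-determinedness at most $n$ indices $k$ yield "cuts"; for the remaining indices the decision procedure of \cite{BaPoRo06} (applied to $\exists X\,(P_k(X)>0\land\chi_{R_\Delta}(X))$, or with $<0$) also tells us the constant sign of $P_k$ on $R_\Delta$ — alternatively, evaluate $P_k$ at any sample point of $R_\Delta$ produced by the decision procedure. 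Store in the leaf of the binary-search tree indexed by $(i_1,\dots,i_n)$: (a) the list of the $\le n$ indices $k$ such that $P_k$ cuts $R_\Delta$, together with pointers to their straight-line programs $\beta_k$, and (b) the vector of constant signs of the other $P_k$ on $R_\Delta$. The whole construction visits $m^n\cdot s$ pairs, each processed in time $d^{\cO(n)}$ on numbers of height $hd^{\cO(n^2)}$, which is absorbed into the bound $2^{hd^{\cO(n^2)}}$ for both the construction time and the size of the tree (here I use the remark in the introduction that $s\le 2^{h(d+1)^n}$ makes $\log s$ disappear into $hd^{\cO(n)}$).

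For the query, given $x\in[0,1]^n_\Delta$, run the algebraic computation tree ${\cal D}_0^m$ of Algorithm~\ref{Query-Partition} with the above $m$: this performs $n$ unidimensional binary searches costing $\cO(n\cdot h(m))=\cO(n\cdot hd^{c'n^2})=hd^{\cO(n^2)}$ comparisons and no arithmetic operations, and returns the index $(i_1,\dots,i_n)$ of a grid hypercube $R$ with $x\in R$; since $x\in[0,1]^n_\Delta$ we have $x\in R_\Delta$. Read off from the leaf the $\le n$ indices of polynomials cutting $R_\Delta$ and the precomputed signs of the rest. For each of these $\le n$ cutting polynomials $P_k$, run its straight-line program $\beta_k$ on input $x$ — this is $L\le d^{\cO(n)}$ arithmetic operations each, for a total of $nL=d^{\cO(n)}$ arithmetic operations — and then one sign comparison per evaluation, i.e. $\le n$ further comparisons, to obtain $\operatorname{sign}(P_k(x))$. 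Combining with the stored signs of the non-cutting polynomials yields the full sign condition $\sigma=(\operatorname{sign}(P_1(x)),\dots,\operatorname{sign}(P_s(x)))$. Correctness of the non-cutting entries is exactly the definition of $\cp$ being $n$-determined over $R_\Delta$ (Proposition~\ref{PropDelta}), so the stored constant sign equals $\operatorname{sign}(P_k(x))$ for $x\in R_\Delta$; correctness of the cutting entries is immediate since $\beta_k$ computes $P_k$. The comparison count is $hd^{\cO(n^2)}$ (binary search) $+\,\cO(n)$ (sign tests) $=hd^{\cO(n^2)}$, and the arithmetic count is $nL=d^{\cO(n)}$, as claimed.

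The main obstacle is the bookkeeping for the preprocessing cost and size: one must check that assembling the binary-search tree over $m^n$ leaves, each carrying a bounded-length list of indices (each index at most $\log s=\cO(h(d+1)^n)$ bits) and pointers into the given programs $\beta_k$, together with $m^n s$ invocations of the cutting test of Proposition~\ref{PropCuts}, all stays within $2^{hd^{\cO(n^2)}}$; this is routine once one notes that $m^n = 2^{n\lceil hd^{c'n^2}\rceil} = 2^{hd^{\cO(n^2)}}$ dominates all polynomial and $\log s$ factors. A secondary point worth stating carefully is that the branching nodes of ${\cal D}_0^m$ only compare $x_j$ against the rational grid values $\frac{i}{m}$, which are legitimate constants of logarithmic height $hd^{\cO(n^2)}$ in the model, and that no divisions are introduced anywhere — the $\beta_k$ are division-free by hypothesis and the binary search uses only comparisons.
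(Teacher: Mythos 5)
Your proposal is correct and follows essentially the same route as the paper: fix the coarseness $m$ via Proposition~\ref{PropDelta}, attach to each leaf of the binary-search tree ${\cal D}_0^m$ the data computed with Proposition~\ref{PropCuts}, and at query time evaluate only the at most $n$ cutting polynomials with their straight-line programs. Your explicit storage of the constant signs of the non-cutting polynomials is a harmless (and slightly more explicit) elaboration of the paper's correctness argument, which instead notes that the output $(i_1,\dots,i_n),\sigma$ already refines the sign-condition partition.
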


\begin{proof}
By Proposition \ref{PropDelta}, there exists a universal constant $c\in\NN$
such that for $m:=2^{h d^{cn^2}}$ and
$\delta:=\frac{1}{m}$ the family of polynomials $\cp$ is $n$-determined over $R_{\Delta}$ for any hypercube
$R\subset [0,1]^n$ of side length $\delta$. With other words, $\delta$
is a coarseness for the family $\cp$. We assume that the constant
$c$ is known.

Based on the Algorithm \ref{Query-Partition} the database $\cal D$ becomes constructed as an algebraic computation tree obtained from ${\cal D}_0^m$ in the following way: to a leaf $l$ of ${\cal D}_0^{m}$ as above with label $(i_1,...,i_n)$, a straight-line program is added. The straight-line program evaluates at the input of $\cal D$ the (at most $n$) polynomials of $\cp$ that cut the set $(\Pi_{j=1}^n[\frac{i_j}{m}, \frac{i_j+1}{m}])_{\Delta}$ and outputs $(i_1,...,i_n)$ and the sign condition $\sigma$ resulting from these evaluations. The cost of the construction of ${\cal D}_0^{m}$ is proportional to its size: $2^{h d^{\cO(n^2)}}$. Then we determine, for each $1\leq k\leq s$, if the polynomial $P_k$
cuts the hypercube $\Pi_{j=1}^n[\frac{i_j}{m}, \frac{i_j+1}{m}]$ following Proposition \ref{PropCuts}. In this way, the complete construction of the
algebraic computation tree $\cal D$ requires $sm^n d^{\cO(n)}=2^{h d^{\cO(n^2)}}$
arithmetic operation between rational numbers of logarithmic height
bounded by $h{d^{\cO(n^2)}}$. 

We claim that $\cal D$ computes a partition of $[0,1]^n_{\Delta}$ which is finer than the partition induced by the sign conditions realized by the family $\cp$. To prove this
claim let $x,x'\in [0,1]^n_{\Delta}$ be such that $\cal D$ outputs the same
result $(i_1,...,i_n), \sigma$ for both $x$ and $x'$. Hence, $x$ and
$x'$ belong to the same small hypercube
$R=\Pi_{j=1}^n[\frac{i_j}{m}, \frac{i_j+1}{m}]$ and satisfy the same
sign condition $\sigma$ for the polynomials that cut $R_{\Delta}$. Since the
remaining polynomials in $\cp$ do not cut $R_{\Delta}$, their signs remain
invariant on it. In particular, $\cp$ realizes the same sign
condition at $x$ and at $x'$.

Finally, let us analyze the complexity of querying the database $\cal D$. Following the complexity analysis of Algorithm \ref{Query-Partition} the binary search requires $ h d^{\cO(n^2)}$ comparisons. The evaluation of a polynomial $P_k\in \cp$, $1\leq k\leq s$ using the straight-line program $\beta_k$ costs $L$ arithmetic operations. Since at most $n$ polynomials of $\cp$ cut the set $R_{\Delta}$, the complete evaluation phase costs at most $nL\leq n d^{\cO(n)}=d^{\cO(n)}$ arithmetic operations. Thus,  the algebraic computation tree $\cal D$ determines, for any input point $x\in[0,1]^n_{\Delta}$, 
the sign conditions satisfied by the polynomials of $\cp$ at $x$ performing 
$h d^{\cO(n^2)}$ comparisons and $nL=d^{\cO(n)}$ arithmetic operations.
\end{proof}


\section{Applications}\label{SecApp}

Let $X_1,...,X_n$ and $U_1,...,U_m$ be indeterminates over $\QQ$ and let $X:=(X_1,...,\allowbreak X_n)$ and $U:=(U_1,...,\allowbreak U_m)$.

Let $d,h\in \NN$ with $d>1$  and suppose that there are given polynomials $G_1,...,G_n\in\ZZ[X]$ of degree and logarithmic height at most $d$ and $h$, which form a regular sequence in $\QQ[X]$. Let $\xi_1,...,\xi_s\in\ZZ^n$ be the integer zeros of the polynomial equation system $G_1=0,...,G_n=0$.

From the geometric and the arithmetic B\'ezout Inequalities \cite{Heintz83}, \cite{Fulton84}, \cite{Vogel84} and \cite{BoGiSo94}, \cite{Philippon91,Philippon94,Philippon95} we deduce $s\leq d^n$ and that the sum of the logarithmic heights of $\xi_1,...,\xi_s$ is bounded by $(h+2n^2)d^n$ (see also [KriPaSo01], Section 1.2).

Finally let $H\in\ZZ[U,X]$ and $\Delta\in\ZZ[U]$ be polynomials of degree and logarithmic height at most $d$ and $h$ respectively.

We assume that the family of polynomials ${\cal H}:=\{H(U,\xi)\;|\;\xi\in\ZZ^n,\allowbreak  G_1(\xi)=0,...,G_n(\xi)=0\}$ is generic in the following sense: for any $1\leq r\leq n$ and any $1\leq i_1<...<i_r\leq s$ the polynomials $H(U,\xi_{i_1}),...,H(U,\xi_{i_r})$ form a regular sequence in $\QQ[U]_{\Delta}$ or generate the trivial ideal.

Observe that for each $1\leq i\leq s$ the degree and logarithmic height of $H(U,\xi_i)$ is bounded by $d$ and $(hm)d^{\cO(n)}$, respectively.

Therefore, we may, for $\delta:=2^{-hd^{\cO(m^2+n)}}$, apply the Proposition \ref{PropDelta} to the family $\cal H$. Thus any hypercube $R\subset[0,1]^m$ of side length $\delta$ has the following property:
there exist $0\leq r\leq m$ indices $1\leq i_1<...<i_r\leq s$ such that $H(U,\xi_{i_1}),...,\allowbreak H(U,\xi_{i_r})$ are the only polynomials of $\cal H$ which  change their signs in $R_{\Delta}$. According to Theorem \ref{ThmSCArith} this implies the following complexity result.

\begin{thm}\label{Thm21}
Let notations and assumptions be as before and suppose that the polynomial $H$ can be evaluated by a straight-line program $\beta$ of size $L$ in $\ZZ[U,X]$. Then there exists a database $\cal D$ represented by an algebraic computation tree of size $(mL)2^{hd^{\cO(m^2+n)}}$ such that for any point $u\in [0,1]^m_{\Delta}$ the query whether the polynomial equation system $G_1(X)=0,...,G_n(X)=0,H(u,X)=0$ has an integer solution can be evaluated in the database $\cal D$ performing $hd^{\cO(m^2+n)}$ comparisons and $mL$ arithmetic operations in $\RR$.
\end{thm}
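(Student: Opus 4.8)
The plan is to reduce the integer-solvability query for the system $G_1(X)=0,\dots,G_n(X)=0,H(u,X)=0$ to a sign-condition query over the explicitly known integer zeros $\xi_1,\dots,\xi_s$ of $G_1=0,\dots,G_n=0$, and then invoke Theorem \ref{ThmSCArith} applied to the family $\cal H$. The key observation is that, because $G_1=0,\dots,G_n=0$ has exactly the finite integer solution set $\{\xi_1,\dots,\xi_s\}$, the augmented system $G_1=0,\dots,G_n=0,H(u,X)=0$ has an integer solution if and only if $H(u,\xi_i)=0$ for some $1\leq i\leq s$; that is, if and only if the sign condition realized at $u$ by the family ${\cal H}=(H(U,\xi_1),\dots,H(U,\xi_s))$ has a zero entry. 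This is a purely logical disjunction over the $s\leq d^n$ polynomials of $\cal H$, evaluated at the fixed point $u$.

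First I would record the arithmetic bounds on $\cal H$: each $H(U,\xi_i)$ has degree at most $d$ in $U$ and, since the logarithmic height of $\xi_i$ contributes through the $d^{\cO(n)}$ monomial evaluations of $H$, logarithmic height bounded by $(hm)d^{\cO(n)}$ — exactly as stated in the text preceding the theorem. The number of variables is now $m$. Feeding these parameters ($n\rightsquigarrow m$ variables, degree $d$, logarithmic height $(hm)d^{\cO(n)}$, family $\cal H$ generic outside ${\cal R}(\Delta)$ by hypothesis) into Proposition \ref{PropDelta} produces the coarseness $\delta:=2^{-hd^{\cO(m^2+n)}}$ already announced, and into Theorem \ref{ThmSCArith} produces a database $\cal D$ of size $2^{hd^{\cO(m^2+n)}}$ answering the sign-condition query for $\cal H$ over $[0,1]^m_\Delta$ with $hd^{\cO(m^2+n)}$ comparisons and $m_0 L_0$ arithmetic operations, where $m_0\leq m$ is the number of members of $\cal H$ that cut the relevant small hypercube and $L_0$ is the straight-line program length needed to evaluate each such member.

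The remaining point — and the one requiring a little care — is the construction of the straight-line programs $\beta_k$ for the polynomials $H(U,\xi_k)\in{\cal H}$: we must check that each can be evaluated by a straight-line program of length $d^{\cO(m)}$ (or at worst absorbed into the $mL$ term of the final bound), since Theorem \ref{ThmSCArith} demands $L\leq d^{\cO(n)}$ in its own notation. Here one substitutes the \emph{integer constants} $\xi_k$ into the given straight-line program $\beta$ of length $L$ for $H\in\ZZ[U,X]$: the coordinates of $\xi_k$ become constants of the program, so a straight-line program for $H(U,\xi_k)$ of length $\cO(L+h(\xi_k))$ results, the $h(\xi_k)$ term accounting for building the integer constants $0,1$-wise. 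Since the sign-condition database only ever evaluates the at most $m$ members of $\cal H$ cutting a given small hypercube, and each such evaluation costs $\cO(L+h(\xi_k))$ operations, the total evaluation cost is $\cO(m(L+\max_k h(\xi_k)))$, which — using that the sum of the $h(\xi_k)$ is bounded by $(h+2n^2)d^n$ and hence each is bounded by that same quantity — is absorbed into $mL\cdot 2^{hd^{\cO(m^2+n)}}$ for the preprocessing and into $mL$ arithmetic operations (after possibly enlarging the implicit constant, or by precomputing the constants $\xi_k$ once at preprocessing and storing them in the tree) for the query phase. Assembling these pieces — the logical equivalence, the parameter substitution into Propositions \ref{PropDelta} and \ref{PropCuts} and Theorem \ref{ThmSCArith}, and the straight-line program bookkeeping — yields a database of the claimed size $(mL)2^{hd^{\cO(m^2+n)}}$ answering the query with $hd^{\cO(m^2+n)}$ comparisons and $mL$ arithmetic operations, completing the proof. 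The main obstacle is purely the bookkeeping of heights and straight-line program lengths under the substitution $X\mapsto\xi_k$; the conceptual content is entirely contained in the finiteness of the integer zero set and in Theorem \ref{ThmSCArith}.
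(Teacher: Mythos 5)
Your proposal is correct and takes essentially the same route as the paper: reduce integer solvability to the vanishing of some $H(u,\xi_i)$, apply Proposition \ref{PropDelta} to the family $\cal H$ in $m$ variables to get the coarseness $2^{-hd^{\cO(m^2+n)}}$, locate the small hypercube by the binary-search tree, and evaluate only the at most $m$ cutting polynomials via the straight-line program $\beta$ with $X$ instantiated at the relevant $\xi_{i_k}$. The only cosmetic difference is that the paper re-runs the construction of Theorem \ref{ThmSCArith} explicitly (hard-wiring the integer points $\xi_{i_k}$ as constants of the computation tree at preprocessing, which is precisely your fallback for the constant-building and the $L\leq d^{\cO(n)}$ issue) rather than invoking it as a black box.
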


\begin{proof}
By Algorithm \ref{Query-Partition} there exists a decision tree ${\cal D}_0$ (see Algorithm \ref{Query-Partition}) of size $2^{hd^{\cO(m^2+n)}}$ which for any $u\in[0,1]^m$ determines a hypercube $R\subset [0,1]^m$  of side-length $\delta$ to which $u$ belongs, using for this task $hd^{\cO(m^2+n)}$ comparisons in $\RR$.  Each leaf of ${\cal D}_0$ corresponds to such an hypercube $R$ and determines therefore an integer $1\leq r\leq m$ and a set of indices $1\leq i_1<...<i_r\leq s$ as before. The algebraic computation tree ${\cal D}$ is obtained from ${\cal D}_0$ in the following way: to a leaf $l$ of ${\cal D}_0$ as above, a straight-line program is added. This program consists of $r$ sequential copies of the labeled directed acyclic graph of $\beta$  with $X$ instantiated to $\xi_{i_1},...,\xi_{i_r}$. The output nodes of these copies of $\beta$ are followed by decision nodes connected in a way such that the algebraic computation tree accepts an input $u\in[0,1]^m_{\Delta}$ which belongs to the hypercube $R$ above if there exists an index $1\leq k\leq r$ with $H(u,\xi_{i_k})=0$. This is exactly the case when the polynomial equation system $G_1(X)=0,...,G_n(X)=0,H(u,X)=0$ has an integer solution.
\end{proof}
\section{Examples}

Let $X:=(X_1,...,\allowbreak X_n)$, $U:=(U_1,...,\allowbreak U_m)$ and $d,h\in\NN$ with $d>1$ be as in Section \ref{SecApp}.

\subsection{Example 1}
Suppose $d\geq m$ and let $H\in \ZZ[U,X]$ be a polynomial of degree and logarithmic height at most $d$ and $h$. Let $\deg_X(H) \geq 1$ and let $\Delta$ be a non-zero coefficient of $H$ belonging to a monomial in $X_1,...,X_n$ of degree not less than one. Then the logarithmic height and the degree of $\Delta$ are also bounded by $d$ and $h$.

Suppose the $H$ can be evaluated by a straight-line program $\beta$ of size $L$ in $\ZZ[U,X]$.

For $1\leq i\leq d$ let $Y^{(i)}_1,...,Y^{(i)}_n$ be new indeterminates over $\QQ$ and let $Y_i:=(Y^{(i)}_1,...,Y^{(i)}_n)$. There exist polynomials $\Gamma_1,...,\Gamma_n\in\ZZ[Y_1,...,Y_d,X]$ of degree and logarithmic height at most $(nd)^{\cO(n)}$, such that the polynomial equation system $\Gamma_1=0,...,\Gamma_n=0$ defines over the algebraic closure of $\QQ(Y_1,...,Y_d)$ the points $Y_1,...,Y_d$.

This can be seen as follows. The product $\AJ$ of the ideals of $\QQ(Y_1,...,Y_d)$ generated by the polynomials $X_1-Y_1^{(i)},...,X_n-Y_n^{(i)}, 1\leq i\leq d$, has only the zeroes $Y_1,...,Y_d$ in the algebraic closure of $\QQ(Y_1,...,Y_d)$. Therefore, the ideal $\AJ$ is zero-dimensional and generated by polynomials of the form $\prod_{1\leq i\leq d} X_{k_i}-Y_{k_i}^{(i)}$, $1\leq k_1,...,k_d\leq n$ which have all degree $d$. Applying to these polynomials a suitable effective version of the Shape Lemma (see e.g. \cite{KrPa96}, Theorem 2.2 and its proof), one obtains the polynomials $\Gamma_1,...,\Gamma_n$ as generators of the radical ideal of $\AJ$.

\begin{lem}\label{lemEx1}
Let $1\leq k\leq m$. Then there exists a non-empty Zariski open set ${\cal O}_k$ of $\CC^{k\times n}$ such that for any $y\in{\cal O}_k$ with $y:=(y_1,...,y_k)$ and $y_1,...,y_k\in\CC^n$ the locally closed subvariety of $\CC^m$ defined by the conditions $H(U,y_1)=0,...,H(U,y_k)=0$ and $\Delta(U)\neq 0$ is empty or equidimensional of dimension $m-k$.
\end{lem}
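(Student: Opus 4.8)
The plan is to obtain all the varieties in the statement simultaneously as the fibers of the second projection of a single incidence variety, and to control their dimensions through the two projections of that variety. Write $U_\Delta:=\{u\in\CC^m:\Delta(u)\neq 0\}$; since $\Delta$ is a non-zero polynomial, $U_\Delta$ is a non-empty principal Zariski open subset of $\CC^m$, hence irreducible of dimension $m$. Consider the incidence variety
$$\widetilde W_k:=\bigl\{(u,y_1,\dots,y_k)\in U_\Delta\times\CC^{k\times n}\;:\;H(u,y_1)=\cdots=H(u,y_k)=0\bigr\}$$
with its two projections $p\colon\widetilde W_k\to U_\Delta$ and $\pi\colon\widetilde W_k\to\CC^{k\times n}$. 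For $y=(y_1,\dots,y_k)\in\CC^{k\times n}$ with $y_1,\dots,y_k\in\CC^n$, the fiber $\pi^{-1}(y)$ is precisely the locally closed subvariety of $\CC^m$ defined by $H(U,y_1)=\cdots=H(U,y_k)=0$ and $\Delta(U)\neq 0$. Hence it suffices to produce a non-empty Zariski open set ${\cal O}_k\subseteq\CC^{k\times n}$ over which every fiber $\pi^{-1}(y)$ is empty or pure of dimension $m-k$.

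The first step is to bound $\dim\widetilde W_k$ by means of $p$, and this is where the hypotheses on $\Delta$ are used. Write $H=\Delta(U)\,X^\alpha+R$, where $\alpha$ is a multi-index with $|\alpha|\ge 1$ and $R\in\ZZ[U,X]$ contains no monomial equal to $X^\alpha$; such an $\alpha$ exists because $\deg_X(H)\ge 1$ and $\Delta$ is the non-zero coefficient of a monomial in $X$ of positive degree. Then for every $u\in U_\Delta$ the polynomial $H(u,X)\in\CC[X]$ still carries the monomial $X^\alpha$ with the non-zero coefficient $\Delta(u)$, hence is non-constant; consequently $\{x\in\CC^n:H(u,x)=0\}$ is a non-empty hypersurface, pure of dimension $n-1$, and the fiber $p^{-1}(u)$, being the product of $k$ such hypersurfaces, is non-empty of dimension $k(n-1)$. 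Applying the theorem on the dimension of the fibers of a dominant morphism to $p$ restricted to each irreducible component $Z$ of $\widetilde W_k$ — whose image is dense in an irreducible closed subset of $U_\Delta$ of dimension at most $m$, and whose generic fiber, being contained in a fiber of $p$, has dimension at most $k(n-1)$ — we conclude $\dim Z\le m+k(n-1)$ for every component $Z$ of $\widetilde W_k$.

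Next I would transport this bound through $\pi$. Let $Z_1,\dots,Z_t$ be the irreducible components of $\widetilde W_k$. For each $Z_j$ on which $\pi$ is dominant, the same fiber-dimension theorem yields a dense Zariski open $V_j\subseteq\CC^{k\times n}$ with $\dim\bigl((\pi|_{Z_j})^{-1}(y)\bigr)=\dim Z_j-kn\le m-k$ for all $y\in V_j$; for each $Z_j$ on which $\pi$ is not dominant, set $V_j:=\CC^{k\times n}\setminus\overline{\pi(Z_j)}$, again a dense Zariski open set. Let ${\cal O}_k:=\bigcap_{j=1}^{t}V_j$, a non-empty Zariski open subset of $\CC^{k\times n}$. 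For $y\in{\cal O}_k$ we have $\pi^{-1}(y)=\bigcup_{j=1}^{t}(\pi|_{Z_j})^{-1}(y)$, a finite union of closed sets each of dimension at most $m-k$, so $\dim\pi^{-1}(y)\le m-k$. On the other hand $\pi^{-1}(y)=\{u\in U_\Delta:H(u,y_1)=\cdots=H(u,y_k)=0\}$ arises by imposing $k$ equations on the irreducible variety $U_\Delta$ of dimension $m$, so by Krull's principal ideal theorem every irreducible component of $\pi^{-1}(y)$ has dimension at least $m-k$. Combining the two inequalities shows that $\pi^{-1}(y)$ is empty or equidimensional of dimension $m-k$, which is the claim.

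I expect the only delicate points to be the reducibility bookkeeping — keeping the bound $\dim Z\le m+k(n-1)$ valid also for components of $\widetilde W_k$ that do not dominate $U_\Delta$, and assembling ${\cal O}_k$ from the finitely many dense opens $V_j$ — together with the verification that $H(u,X)$ is genuinely non-constant for every $u\in U_\Delta$, which is exactly where the hypotheses on $\deg_X(H)$ and on $\Delta$ enter; the rest is a routine invocation of the fiber-dimension theorem and of Krull's Hauptidealsatz. An equally routine alternative is induction on $k$: given the statement for $k-1$, apply Krull's theorem on each irreducible component $\Gamma$ of the variety produced at stage $k-1$, observing that $\Gamma\subseteq U_\Delta$ forces $\Gamma\times\CC^n\not\subseteq\{H=0\}$ (otherwise $\Delta$ would vanish identically on $\Gamma$), so that $H(\,\cdot\,,y_k)$ does not vanish identically on $\Gamma$ for $y_k$ outside a proper Zariski closed subset of $\CC^n$.
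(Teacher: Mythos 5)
Your proof is correct and follows essentially the same route as the paper: the same incidence variety in $\CC^m\times\CC^{k\times n}$ with its two projections, an upper bound $\dim\le m-k+kn$ on its components via the fiber-dimension theorem (using, as you make explicit, that $\Delta(u)\neq0$ forces $H(u,X)$ to be non-constant), a dense open set in $\CC^{k\times n}$ obtained by intersecting one open per component, and a lower bound of $m-k$ on the fibers. The only cosmetic differences are that you obtain the lower bound from Krull's Hauptidealsatz on the affine open $U_\Delta$ rather than by counting the $k(n+1)$ defining equations in the product space, and that you work with the locally closed incidence variety directly instead of its closure.
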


\begin{proof}
In $\CC^{m}\times \CC^{k\times n}$ we consider the locally closed subvariety $W$ defined by the conditions

$$H(U,Y_1)=0,...,H(U,Y_k)=0\text{  and  }\Delta(U)\neq 0.$$ 

Let $\pi_1:\CC^{m}\times \CC^{k\times n}\rightarrow \CC^{m}$ and $\pi_2:\CC^{m}\times \CC^{k\times n}\rightarrow \CC^{k\times n}$ be the canonical projections and let $C$ be an irreducible component of $\overline W$. Then $\overline{\pi_1(C)}$ is an irreducible closed subvariety of $\CC^m$ and hence of dimension at most $m$. Moreover $\Delta$ does not vanish identically on $\overline{\pi_1(C)}$. Therefore we may choose generically a point $u\in\pi_1(C)$ with $\Delta(u)\neq 0$. Observe that $\pi_1^{-1}(u)\cap C$ is a closed subvariety of $C$ of dimension at most $k(n-1)=nk-k$. The Theorem on the Dimension of the Fibers \cite[Theorem 7]{Shafarevich} implies now $\dim (C)\leq m-k+nk$. Suppose first that $\overline{\pi_2(C)}$ is strictly contained in $\CC^{k\times n}$. Then ${\cal O}_{k,C}:=\CC^{k\times n}- \overline{\pi_2(C)}$ is a non-empty Zariski open subset of $\CC^{k\times n}$ and for $y\in {\cal O}_{k,C}$ we have $\pi_2^{-1}(y)\cap C = \emptyset$. 

Now suppose $\overline{\pi_2(C)}=\CC^{k\times n}$.
By the Theorem on the Dimension of the Fibers there exists a non-empty Zariski open subset 
${\cal O}_{k,C}\subset\CC^{k\times n}$ such that for any $y\in{\cal O}_{k,C}$ the variety $ \pi_2^{-1}(y)\cap C$ is empty or equidimensional of dimension $\dim (C)-kn\leq m-k$.
Let $${\cal O}_{k}=\bigcap_{\begin{array}{c}C \text{ irreducible} \\\text{component of }\overline{W}\end{array}}{\cal O}_{k,C}.$$ Then ${\cal O}_{k}$ is a non-empty algebraic subset of $\CC^{k\times n}$ and for any $y\in {\cal O}_{k}$ the irreducible components of $\pi_2^{-1}(y)\cap \overline{W}$ are of dimension at most $n-k$.

Let $y\in {\cal O}_k$ with $y=(y_1,...,,y_k)$ and $y_1,...,y_k\in\CC^n$ and let $D$ be an irreducible component of $\pi_2^{-1}(y)\cap W$. Observe that $D$ is an irreducible component of a locally closed subvariety of $\CC^m\times\CC^{k\times n}$ which is definable by $k(n+1)$ equations and the open condition $\Delta(U)\neq 0$. 

Therefore, $D$ has dimension at least $m-k$. From our previous argumentation we conclude now that the dimension of $D$ is \emph{exactly} $m-k$. In particular $\pi_2^{-1}(y)\cap W$ is empty or equidimensional of dimension $m-k$. Since $\pi_2^{-1}(y)\cap W$ is isomorphic to the locally closed subvariety of $\CC^m$ defined by the conditions $H(U,y_1)=0,...,H(U,y_k)=0$ and $\Delta(U)\neq 0$, the lemma follows.
\end{proof}

Lemma \ref{lemEx1} implies the following result.

\begin{cor}\label{corEx1}
There exists a non-empty Zariski open subset $\cal O$ of $\CC^{d\times n}$ such that any point $y\in \cal O$ with $y=(y_1,...,y_d)$ and $y_1,...,y_d\in \CC^n$ satisfies the following condition:

for any $1\leq r \leq m$ and any $1\leq i_1<...<i_r\leq d$ the polynomials $H(U,y_{i_1}),...,H(U,y_{i_r})$ form a regular sequence or generate the trivial ideal in $\QQ[U]_{\Delta}$.
\end{cor}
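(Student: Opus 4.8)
The plan is to derive the corollary from Lemma \ref{lemEx1} by intersecting finitely many Zariski open sets and then translating the equidimensionality statement into the language of regular sequences. First I would recall that for $1\le r\le m$ and $1\le i_1<\dots<i_r\le d$, the projection $\CC^{d\times n}\to\CC^{r\times n}$ onto the coordinates indexed by $i_1,\dots,i_r$ is a dominant morphism; applying Lemma \ref{lemEx1} with $k:=r$ to the target $\CC^{r\times n}$ yields a non-empty Zariski open set $\mathcal O_{i_1,\dots,i_r}\subset\CC^{r\times n}$, whose preimage $\mathcal O'_{i_1,\dots,i_r}$ in $\CC^{d\times n}$ is again non-empty Zariski open. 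Since $r$ ranges over $\{1,\dots,m\}$ and the index tuples over a finite set, I would then set $\mathcal O:=\bigcap \mathcal O'_{i_1,\dots,i_r}$, a finite intersection of non-empty Zariski open subsets of the irreducible variety $\CC^{d\times n}$, hence itself non-empty and Zariski open.

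Next I would fix $y=(y_1,\dots,y_d)\in\mathcal O$ and a tuple $1\le i_1<\dots<i_r\le d$, and verify the regular-sequence condition. By the choice of $\mathcal O$, Lemma \ref{lemEx1} applied to $(y_{i_1},\dots,y_{i_r})$ tells us that the locally closed subvariety of $\CC^m$ cut out by $H(U,y_{i_1})=0,\dots,H(U,y_{i_r})=0$ together with $\Delta(U)\ne 0$ is empty or equidimensional of dimension $m-r$. I would then invoke the standard algebraic criterion: in the localized ring $\QQ[U]_\Delta$ (whose associated affine variety is exactly the open set $\{\Delta\ne 0\}\subset\CC^m$, which is smooth and equidimensional of dimension $m$, in particular Cohen--Macaulay), a sequence of $r$ elements is a regular sequence precisely when the quotient has dimension $m-r$, i.e.\ when each successive element is a non-zero-divisor; and if the common zero set in $\{\Delta\ne 0\}$ is empty then the ideal they generate is the trivial ideal. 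Thus the equidimensionality-of-dimension-$(m-r)$ conclusion of Lemma \ref{lemEx1} gives exactly that $H(U,y_{i_1}),\dots,H(U,y_{i_r})$ either form a regular sequence in $\QQ[U]_\Delta$ or generate the trivial ideal.

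The main point requiring care is the passage from ``equidimensional of dimension $m-r$'' over $\CC$ to ``regular sequence in $\QQ[U]_\Delta$'': one must note that $\QQ[U]_\Delta$ localizes a Cohen--Macaulay ring, so that dropping dimension by exactly one at each step is equivalent to the non-zero-divisor property (this is where Cohen--Macaulayness of the affine space, hence of the principal open set $\{\Delta\ne 0\}$, is used), and one must observe that a set of generators of the unit ideal trivially satisfies the stated alternative, covering the ``empty'' case of the lemma. A secondary routine check is that the set-theoretic conditions of Lemma \ref{lemEx1} on $(y_{i_1},\dots,y_{i_r})$ are indeed implied by $y\in\mathcal O$, which is immediate from the definition of $\mathcal O$ as the intersection of the pullbacks of all the $\mathcal O_{i_1,\dots,i_r}$. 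No genericity on $H$ or $\Delta$ beyond what is already assumed is needed; everything follows formally from Lemma \ref{lemEx1} and the Cohen--Macaulay dimension criterion.
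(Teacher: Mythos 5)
Your proposal is correct and follows exactly the route the paper intends: the paper offers no proof beyond the remark that Lemma \ref{lemEx1} implies the corollary, and your argument (pull back the open sets ${\cal O}_r$ along the coordinate projections $\CC^{d\times n}\to\CC^{r\times n}$, intersect the finitely many resulting non-empty Zariski open sets, then convert ``empty or equidimensional of dimension $m-r$'' into ``trivial ideal or regular sequence'' via Krull's height theorem and the Cohen--Macaulay/unmixedness criterion in the localization at $\Delta$) is precisely the routine derivation being left to the reader. The only slight looseness --- identifying the complex zero set of $H(U,y_{i_1}),\dots,H(U,y_{i_r})$ in $\{\Delta\neq 0\}$ with the spectrum of $\QQ[U]_\Delta$, which strictly requires the $y_{i_j}$ to be rational (as in the paper's later use of integer points $y$) or else working in $\CC[U]_\Delta$ and descending by faithful flatness --- is present in the paper's own statement and does not affect the argument.
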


From \cite{DiFiGiSe91}, Proposition 1.12 and its proof we deduce that there exists a polynomial $Q\in\QQ[Y]$ of degree at most $d^{\cO(m^2)}$ such that any point $y\in \CC^{d\times n}$ with $y=(y_1,...,y_d)$, $y_1,...,y_d\in\CC^{n}$ and $Q(y)\neq0$ satisfies the condition of Corollary \ref{corEx1}.

Therefore, there exists such a point $y\in\ZZ^{d\times n}$ of logarithmic height at most $\cO(m^2\log d)$. This implies that the logarithmic heights of $G_1:=\Gamma(y,X),...,G_n:=\Gamma(y,X)$ and $H(U,y_1),...,H(U,y_d)$ are bounded by $(m^2+h)(nd)^{\cO(n)}$. We may now apply Theorem \ref{Thm21} to this situation to conclude that there exists a database $\cal D$ represented by an algebraic computation tree of size $(mL)2^{hd^{\cO(m^2+n)}}$ such that for any point $u\in[0,1]_{\Delta}$ the query whether the polynomial equation system $G_1(X)=0,...,G_n(X)=0, H(u,X)=0$ has a complex solution can be evaluated in the database $\cal D$ performing $hd^{\cO(m^2+n)}$ comparisons and $mL$ arithmetic operations in $\RR$.

\subsection{Example 2}
For $0\leq j< 2^n$ we write $[j]\in \{0,1\}^n$ for the representation of $j$ by $n$ bits.

Let $G_1:=X_1^2-X_1,...,G_n=X_n^2-X_n$ and $$H:=\sum_{1\leq k \leq m}\prod_{1\leq l\leq m}(1+(U_k^{2^l}-1)X_l).$$

Then, we have for $0\leq j< 2^n$ the identity $H(U,[j])=\sum_{1\leq k \leq m}U_k^j$. The degree and the logarithmic height of $H(U,[j])$ are bounded by $2^n$  and one, respectively. Furthermore $H$ may be evaluated by a straight-line program in $\ZZ[U,X]$ of size $L=\cO(mn)$. For $1\leq r\leq m$ and $1\leq i_1<...<i_r\leq 2^n$ let
$$\Delta_{i_1,...,i_r}:=\det \left( \begin{array}{ccc}
U_1^{i_1-1} & ... & U_r^{i_1-1} \\
\vdots & \ddots & \vdots \\
U_1^{i_r-1} & ... & U_r^{i_r-1} \end{array} \right)$$
and observe 
$\Delta_{i_1,...,i_r}\neq 0$. Thus $$\Delta:=\prod_{\begin{array}{c}1\leq r\leq m \\ 1\leq i_1<...<i_r\leq 2^n\end{array}}\Delta_{i_1,...,i_r}$$ is a non-zero polynomial of $\ZZ[U]$ of degree and logarithmic height at most $2^{\cO(mn)}$ .

For $1\leq r\leq m$ and $1\leq i_1<...<i_r\leq 2^n$, the Jacobian of $H(U,[i_1]),...,\allowbreak H(U,[i_r])$ is $$ \left( \begin{array}{ccc}
i_1U_1^{i_1-1} & ... & i_1U_m^{i_1-1} \\
\vdots & \ddots & \vdots \\
i_rU_1^{i_r-1} & ... & i_rU_m^{i_r-1} \end{array} \right).$$

Hence, the hypersurfaces of $\CC^m$ defined by the polynomials $H(U,[i_1]),...,\allowbreak H(U,[i_r])$ intersect transversally at any common point $u\in\CC^m_{\Delta}$.

We may again apply Theorem \ref{Thm21} to this situation to conclude that there exists a database $\cal D$ represented by a computation tree of size $$(m L) 2^{2^{\cO(m^3n^2)}}=2^{2^{\cO(m^3n^2)}}$$ such that for any point $u\in[0,1]^m_{\Delta}$ the query whether the polynomial equation system $X_1^2-X_1=0,...,X_n^2-X_n=0, H(u,X)=0$ has a complex solution can be evaluated in the database $\cal D$ performing ${2^{\cO(m^3n^2)}}$ comparisons and $mL=\cO(m^2n)$ arithmetic operations in $\RR$.

\paragraph{Observation:} Suppose that the ceiling function is available at unit costs. Then, the queries of Theorem \ref{Thm21}, Example 1 and Example 2 can be evaluated using $mL+n$ and $\cO(m^2n)$ arithmetic operations in $\QQ$, respectively.

The aim of Theorem \ref{Thm21} and Examples 1 and 2 is not to promote new upper complexity bounds for the queries under consideration. Our complexity model is purely algebraic and our algorithms cannot be translated efficiently to the bit model, at least not with the actual knowledge about the relationship between these two models.

However, Examples 1 and 2 show that, admitting branchings in our complexity model, the number of algebraic operations necessary to answer the queries may drop dramatically with respect to traditional methods based on the evaluation of elimination polynomials or their coefficient representation. In Theorem \ref{Thm21}, Example 1 and Example 2 these bounds are of order $Ld^n$, $Ld$ and $\cO(mn2^n)$, respectively (see \cite{HKR11} for a detailed discussion of this issue).

\section*{Acknowledgments}
Research partially supported by the following Argentinian, Belgian and
Spanish grants: CONICET PIP 2461/01, UBACYT 20020100100945, UBACYT 20020100300067, PICT-2010-0525, FWO G.0344.05, MTM2010-16051.

\bibliographystyle{amsplain} 
\bibliography{Tesis-Doc}

\end{document}